\NewDocumentCommand{\ipic}{mO{scale=0.4}}{
	\includegraphics[#2]{#1.pdf}
}
\NewDocumentCommand{\pic}{O{}mO{scale=0.4}}{
	\begin{figure}[h]
		\center
		\ipic{#2}[#3]
		\caption{\label{#2}	#1}
	\end{figure}
}
\NewDocumentCommand{\pics}{omo}{
	\begin{figure}[h]
		\center
		#2
		\IfValueTF{#1}{\caption{#1 
			\IfValueTF{#3}{\label{#3}}{}
		}}{}		
	\end{figure}
}
\NewDocumentCommand{\subpic}{O{scale=0.4}mm}{
	\subcaptionbox{#2 
		\label{#3}		
	}{\ipic{#3}[#1]}
}
\NewDocumentCommand{\sym}{smO{}mo}{
	\expandafter\NewDocumentCommand\csname #2\endcsname{#3}{\ensuremath{{#4}}\xspace}
}
\NewDocumentCommand{\name}{mmm}{
	\expandafter\NewDocumentCommand\csname #1n\endcsname{}{#2\xspace}
    \expandafter\NewDocumentCommand\csname #1e\endcsname{}{\emph{#2}\xspace}
    \expandafter\NewDocumentCommand\csname #1u\endcsname{}{\capitalisewords{#2}\xspace}
	\expandafter\NewDocumentCommand\csname #1a\endcsname{}{#3\xspace}	
	\expandafter\NewDocumentCommand\csname #1ns\endcsname{}{#2s\xspace}
	\expandafter\NewDocumentCommand\csname #1as\endcsname{}{#3s\xspace}	
}
\newtheorem{lemma}{Lemma}
\newtheorem{corollary}{Corollary}
\newtheorem{theorem}{Theorem}
\newtheorem{definition}{Definition}
\name{optsol}{optimal solution}{OS}
\name{dc}{Linear DC network}{LDC network}
\name{msf}{maximum switching flow}{MSF}
\name{ots}{optimal transmission switching flow}{OTS}
\name{feas}{feasibility problem}{FEAS}
\name{feassol}{feasible solution}{swFEAS}
\name{sch}{choice network}{SCN}
\begin{document}

\title{The Complexity of DC-Switching Problems}
\author{
    Karsten Lehmann 
    \textsuperscript{1,2}
    Alban Grastien
    \textsuperscript{1,2}
    Pascal Van Hentenryck
    \textsuperscript{2,1}
}

\affiliation{
    \textsuperscript1
    Optimisation Research Group, National ICT Australia, Canberra Laboratory\\
    \textsuperscript2
    Artificial Intelligence Group, Australian National University\\
    \email{first.last@nicta.com.au}
}

\headertitle{Lehmann, K. DC-Switching Technical Report}
\submitdate{}
\publishdate{}
\reportnumber{ZZZZ}
\copyrightyear{2014}

\begin{abstract}
    This report provides a comprehensive complexity study of
    line switching in the Linear DC model for the
    \feasn and the optimization problems of  
    maximizing the load that can served (maximum switching flow, MSF) 
    and
    minimizing generation cost (optimal transmission switching, OTS).
    Our results show that these problems are NP-complete and that there is no fully polynomial-time approximation scheme for planar networks with a maximum-node degree of $3$. 
Additionally, we demonstrate that the problems are still NP-hard if we restrict the network structure to cacti with a maximum degree of $3$.
We also show that the optimization problems can not be approximated within any constant factor.
\end{abstract}

\frontmatter
\tableofcontents
\mainmatter


\section{Introduction}
In this report we use the commonly accepted Linear DC power network model and consider reconfiguration via \emph{line switching} for the optimization problems:
maximizing the load served (\msfn, \msfa)
and 
minimizing the total generation costs (\otsn, \otsa).
Contained within both problems is the \feasn (\feasa), deciding whether a given demand can be satisfied.
We refer to these problems as the \emph{switching problems}.
Line switching means physically disconnecting  two buses that were previously connected  (or vice versa, connecting two previously disconnected buses).

The report provides the proof that the \msfa and the \otsa problem can not be approximated within any constant factor.
Real world power networks are not arbitrary graphs: for instance, their maximum bus degree is limited and they are (almost) planar networks, hence the class of real world power networks could still be easy to solve.
Consequently, we study the complexity of the switching problems for the class of planar graphs with a maximum degree of 3.
We establish the result that the \feasn is strongly NP-hard for planar graphs with a maximum degree of $3$.

For tree networks, the switching problems are easy because the complexity is driven by cycles.
In the absence of cycles, there are no cyclic dependencies on the phase angles. 
Hence they can be chosen in such a way as to match any optimal solution of the traditional max flow. 
Cacti are an obvious relaxation of trees.
They allow for cycles but every line can only put constraints on at most one cycle.
The switching problem for cacti is NP-hard.

N-level Tree networks are another possible relaxation of trees.
An n-level Tree network is an \dca based on a tree where there is one generator at the root and loads at the leaves.
Lines that are not part of the tree can only be added between buses on the same tree level where the level is less or equal to $n$ and only such that the resulting graph is planar.
This network structure is motivated by the disaster management application.
After the destruction of many power lines, it is easier to first repair lines such that we obtain a tree structure.
Then we can start restoring additional lines.
We show that the \msfa and the \otsa problem for 2-level tree networks is NP-hard.

Table~\ref{tbl:overview} presents an overview of the results contained in this document.
The number $3$ after cacti and planar stands for ``with maximum degree of 3''.
Note that the \feasn is a sub-problem of \msfa and \otsa.
It should be noted that the complexity results we present use unrealistic network parameters for simplicity.
The values can all be scaled to be realistic without influencing the results.

\begin{table}[h]
    \center
    \begin{tabular}{|c|c|c|c|c|c|}
        \hline
       problem & graph structure & loads & generators & complexity & reference \\
    \hline \hline
    \feasa & cacti3 & * & * & NP-hard & Theorem~\ref{theo:cacti} \\
     \msfa & cacti3 & * & * & NP-hard& Lemma~\ref{lem:feas}\\
     \otsa & cacti3 & * & * & NP-hard& Lemma~\ref{lem:mots}\\
     \msfa & 2-level tree& * & 1 & NP-hard & Theorem~\ref{theo:tree}\\
     \otsa & 2-level tree& * & 1 & NP-hard & Lemma~\ref{lem:mots}\\
     \feasa & planar3 & 1 & 1 & strongly NP-hard & Theorem~\ref{theo:planar}\\
     \msfa & planar3 & 1 & 1 & strongly NP-hard & Lemma~\ref{lem:feas}\\
     \otsa & planar3 & 1 & 1 & strongly NP-hard & Lemma~\ref{lem:mots} \\
     \msfa & arbitrary & 2 & 2 & non-APX & Theorem~\ref{theo:approx}\\
     \otsa & arbitrary & 1 & * & non-APX & Theorem~\ref{theo:approxots}\\
    \hline
    \end{tabular}
    \caption{Result Overview}
    \label{tbl:overview}
\end{table}

\section{Definitions}
This section presents the network model, the power flow equations and defines the switching problems this report is concerned with.

\subsection{\dcn}
In this report we use the \emph{Linear DC (LDC) model} of electrical power networks \cite{Schweppe_1970_Powersystemstatic}.
The LDC model is a linearization of the nonlinear steady-state electrical power flow equations (Alternating Current Model) and is widely used in practice.  
It assumes that all voltage magnitudes are one in the per-unit system and ignores reactive power and resistance which are small relative to real power and reactance during normal operations. 
What is left is the susceptance (a physical parameter of networks)\footnote{Susceptance is a negative value but when used to calculate the flow it is multiplied by $-1$. For readability, we omit the $-1$ and make the susceptance a positive value.}
, the capacity and the phase angles of the voltages.
The flow that is transmitted by a line is the product of the phase angle difference between its two ends and the susceptance.  

\begin{definition}
    Let \buses be a set.
    Given the set $A = \{(\{\ba,\bb\}, \ca, \su) \mid \ba, \bb \in \buses; \ca \in \preals; \su \in \preals\}$
    we define 
    $\alllines := \{ T \subseteq A \mid 
        \forall (\{\ba,\bb\},\ca[1],\su_1), (\{\ba,\bb\},\ca_2,,\su_2) \in T: \ca[1]=\ca_2, \su_1=\su_2
    \}$.
\end{definition}

\begin{definition}
	A \dce (\dca) is a tuple $\net = (\buses, \lines, \plmin, \plmax, \pgmax, \costs)$ where 
    \buses is the set of buses;
    $\lines \in \alllines[\buses]$ is the set of lines; 
\func{\plmin}{\buses}{\preals}, \func{\plmax}{\buses}{\preals} are the minimum and maximum demand;
\func{\pgmax}{\buses}{\preals} is the maximum generation;
and
\func{\costs}{\buses}{\preals} are the generation costs.
\end{definition}

\subsection{Power Flow Equations}
We now introduce the notations and equations pertaining to \dca power flows.
The following definitions assume a fixed \dca $\net = (\buses, \lines, \plmin, \plmax, \pgmax, \costs)$.
The value of the function \func{\pa}{\buses}{\reals}, $\pa[\ba]$ is the \emph{phase angle} at bus \ba. 
Moreover the \emph{generation} and \emph{load} at a bus are given by functions \func{\pgen}{\buses}{\preals} and \func{\pload}{\buses}{\preals}. 
The directed version of the set of lines is denoted by
$\dlines := \{ (\ba,\bb,\ca,\su), (\bb,\ba,\ca,\su) \mid (\{\ba,\bb\},\ca,\su) \in \lines\}$.
We write \edge for a line from \ba to/between \bb with susceptance \su and capacity \ca.

The \emph{flow} on a line is given by \func{\pflow}{\dlines}{\reals}.

The \dca model imposes two sets of constraints: Kirchhoff's conservation law and the \dca power law.
Kirchhoff's conservation law states that the power that enters a bus equals the power that leaves this bus.  
The \dca power law binds together the power flow, the phase angle and the susceptance of a line.  
A \feassoln is a tuple \feassol that satisfies both laws and the generation and load bounds
where \linesw is the set of switched off lines.

\begin{definition}
    Given the tuple \feassol we define a flow \pflow via
    $\forall \edge \in \dlinesw: 
    \pflow[\sedge] := \su(\pa[\bb] - \pa[\ba])
    $.
    We call \feassol \emph{\feassoln} if
    \begin{itemize}
        \item $\linesw \subseteq \lines$; 
        \item $\forall \edge \in \dlines \setminus \dlinesw: |\pflow[\sedge]| \leq \ca $;
        \item $\forall \ba \in \buses:
    \plmin[\ba] \leq \pload[\ba] \leq \plmax[\ba]$;
        \item $\forall \ba \in \buses:
    0 \leq \pgen[\ba] \leq \pgmax[\ba]$;
        \item $\forall \ba \in \buses:
    \sum_{\sedge \in \dlines \setminus \dlinesw} \pflow[\sedge] = \pgen[\ba] - \pload[\ba]$.
    \end{itemize}
    The set of all \feassolns is denoted with \feassols.
\end{definition}

\subsection{The Switching Problems}
This report presents complexity results for three types of problems: minimizing generation dispatch; maximizing the served demand; and the question of whether or not a given demand can be satisfied.
In all three cases we consider that lines can be switched off.

\begin{definition}
    Let \net be an \dca. The \emph{\msfn} (\msfa) is defined as
    $$\msf := \max_{\feassol \in \feassols} \sum_{\ba \in \buses} \pload[\ba].$$
    The \emph{\otsn} (\otsa) is defined by
    $$\ots := \min_{\feassol \in \feassols} \sum_{\ba \in \buses} \costs[\ba]\pgen[\ba].$$
    The \feasn \feas is to decide whether or not $\feassols \neq \emptyset$.
\end{definition}
To establish our complexity results, we define a decision version of our optimization problems.
\begin{definition}
    Given an $x \in \preals$. The \emph{\msfa (resp. \otsa) problem} is the problem of deciding if $\msf \geq x$ (resp. $\ots \leq x$).
\end{definition}

\subsection{Examples and Graphical Representation:}
\pics[Examples for MPF and \msfa.]{
	\subpic{An \dca.}{triangle}
    \subpic{The max flow.}{triangle_mf}
	\subpic{The MPF.}{triangle_mpf}
	\subpic{The \msfa.}{triangle_msf}	
}[pic:examples]
Figure~\ref{pic:examples} introduces our graphical representations for \dcas along with examples for the maximum potential flow (MPF) and \msfa. 
The MPF is the \msfa without the switching.
We omit the susceptance and/or capacity of a line when its value is $1$.
Figure~\ref{triangle} shows an \dca where $g$ is a generator (box), $l$ is a load (house) and $b$ is a normal bus (sphere). 
It is easy to see that the traditional max flow for this network is $34$ whereas in the LDC model, we only can supply $16$ as shown in Figure~\ref{triangle_mpf} because the congestion of the line \edge[bl][1][4] constrains the phase angle (written as $A=$ in the buses) between $g$ and $l$. 
However, by switching the line \edge[gb][1][5], we can improve the maximum generation to $30$ as shown in Figure~\ref{triangle_msf}.

\subsection{Notional Conventions}
For all the following proofs we set the convention that any value for any of the functions \plmin, \plmax, \pgmax, \costs, \pa, \pgen and \pload that is not specifically defined is $0$.
When defining networks we do not specify the costs \costs if they are not of interest.
Whenever we write that a bound is infinite, we mean that having a bound does not matter to make the proof work.
The reader can assume that the actual value is sufficiently large to not influence the result of the proof.

\section{Relationship between the Switching Problems}
In this section we present the results that the \feasn is a sub-problem for the \msfa problem; and the \msfa problem is a sub-problem of the \otsa problem.
Hence every \feasn result does also apply for \otsa and \msfa.
\begin{lemma}
    \label{lem:feas}
    Let $\net = (\buses, \lines, \plmin, \plmax, \pgmax)$ be a \dca.
    Then we have  
    $\msf \geq \sum_{\ba \in \buses} \plmin[\ba] \iff \feassols \neq \emptyset$.
\end{lemma}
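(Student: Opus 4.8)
The plan is to prove the two directions of the equivalence separately, as each is a short unpacking of the definitions rather than a genuine combinatorial argument. The implication $\feassols \neq \emptyset \Rightarrow \msf \geq \sum_{\ba \in \buses} \plmin[\ba]$ follows directly from the load lower-bound constraint, while for the converse I would use contraposition together with the non-negativity of loads. The only real content is the observation that \emph{every} \feassoln already serves at least the total minimum demand, so in particular the load-maximising solution does.

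First I would handle the forward implication. Assuming $\feassols \neq \emptyset$, fix an arbitrary feasible solution \feassol from \feassols. By the load-bound condition in the definition of a \feassoln we have $\plmin[\ba] \leq \pload[\ba]$ at every bus $\ba \in \buses$; summing these inequalities over all buses yields $\sum_{\ba \in \buses} \pload[\ba] \geq \sum_{\ba \in \buses} \plmin[\ba]$. Since \msf is defined as the maximum of $\sum_{\ba \in \buses} \pload[\ba]$ over all of \feassols, and this one solution already witnesses a value of at least $\sum_{\ba \in \buses} \plmin[\ba]$, the maximum is at least that large.

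Next I would prove the converse by contraposition: assuming $\feassols = \emptyset$, I want $\msf < \sum_{\ba \in \buses} \plmin[\ba]$. With an empty feasible set, \msf is a maximum over the empty set, for which I adopt the convention $\msf = -\infty$. Since \func{\plmin}{\buses}{\preals} maps into the non-negative reals, the right-hand side satisfies $\sum_{\ba \in \buses} \plmin[\ba] \geq 0 > -\infty = \msf$, so the desired inequality fails, which is exactly the contrapositive.

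The one point I would take care to state explicitly — and the closest thing here to an obstacle — is this empty-maximum convention, since without fixing it the hypothesis is not literally a comparison between two real numbers. Equivalently, and perhaps more transparently, the converse can be phrased positively: the hypothesis forces $\msf \geq 0$, so \msf is a genuine real number, and a real-valued maximum of the objective over \feassols can only be attained if \feassols contains at least one element. Either formulation closes the argument without any further computation.
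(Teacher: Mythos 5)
Your proof is correct and takes essentially the same route as the paper's: the forward direction sums the load lower bounds $\plmin[\ba] \leq \pload[\ba]$ over a witness feasible solution, and the converse observes that an empty feasible set makes the inequality unattainable. The only difference is that you make explicit the empty-maximum convention ($\msf = -\infty$ when $\feassols = \emptyset$), which the paper leaves implicit in its remark that ``otherwise there is no flow possible at all.''
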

\begin{proof}
    If there is a \feassoln then we can satisfy the minimum demand on every bus and hence 
    $\msf \geq \sum_{\ba \in \buses} \plmin[\ba]$. 
    On the other hand, the \msfa can only be
    $\msf \geq \sum_{\ba \in \buses} \plmin[\ba]$ if there exists a \feassoln because otherwise there is no flow possible at all.
\end{proof}

We can encode every \msfa problem into an \otsa problem such that they have the same network structure.
Let us assume that no load has a generator.
This is done as follows.
We set the costs of every generator to $0$ and fix the demand.
Then we place a generator with cost 1 at every load.
These generators get a maximum generation equal to the difference between the maximum and the minimum demand.
That way we pay a cost of $1$ for every $1$ demand we can not satisfy via the original generators.
This encoding implies that any hardness result for \msfa is also true for the \otsa.
However it does not apply to approximation results.

\begin{lemma}
    \label{lem:mots}
    Let $\net = (\buses, \lines, \plmin, \plmax, \pgmax)$  be a \dca network with
    $\pgmax[\ba] > 0 \implies \plmax[\ba] = 0$ and $\plmax[\ba] > 0 \implies \pgmax[\ba] = 0$.
    We define 
    $\net' = (\buses, \lines, \plmin', \plmax, \pgmax', \costs)$  with
    $\plmin[\ba]' := \plmax[\ba]$;
    $\plmax[\ba] > 0 \implies \pgmax[\ba]' := \plmax[\ba]-\plmin[\ba], \costs[\ba] := 1$.
    We have
    $\sum_{\ba \in \buses} \plmax[\ba] - \msf[\net] = \ots[\net'].$
\end{lemma}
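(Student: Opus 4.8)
The plan is to set up a bijection between the feasible solutions of $\net$ and those of $\net'$ under which the load served in $\net$ and the dispatch cost in $\net'$ sum to the constant $\sum_{\ba \in \buses} \plmax[\ba]$; the identity then follows by taking the maximum on one side and the minimum on the other. First I would record what the new bounds force. The new minimum demand is set to $\plmax[\ba]$, which coincides with the (unchanged) maximum demand $\plmax[\ba]$, so every load in $\net'$ is pinned to $\pload[\ba] = \plmax[\ba]$ and the full demand must be met. The only freedom left in $\net'$ is how this demand is supplied: by the original cost-$0$ generators, whose bounds carry over unchanged, or by the cost-$1$ generators newly placed at each load bus with capacity $\plmax[\ba] - \plmin[\ba]$. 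The disjointness hypotheses $\pgmax[\ba] > 0 \implies \plmax[\ba] = 0$ and $\plmax[\ba] > 0 \implies \pgmax[\ba] = 0$ guarantee that these two generator populations sit on disjoint buses, so no bus plays both roles.

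Next I would define the correspondence explicitly. Given a feasible solution $(\linesw, \pa, \pgen, \pload)$ of $\net$, I keep the switched set $\linesw$, the phase angles $\pa$ and the original generation unchanged, set every load in $\net'$ to its forced value $\plmax[\ba]$, and let the new generator at each load bus $\ba$ output exactly the previously unserved demand $\plmax[\ba] - \pload[\ba]$. The reverse map reads the original load off as $\plmax[\ba]$ minus the new generation and copies the original generation back. The crux is to check that these maps land in the respective feasible sets. Since $\pa$ and $\linesw$ are untouched, the induced line flows $\pflow$, the capacity constraints and the DC power law are literally identical on the two sides, so only Kirchhoff's balance law and the generation/load bounds require attention.

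The balance check is the heart of the argument: I would verify that the net injection $\pgen[\ba] - \pload[\ba]$ agrees in both networks bus by bus. At an original generator bus the load is $0$ on both sides and the generation is copied, so the injections match; at a load bus the injection in $\net'$ is $(\plmax[\ba] - \pload[\ba]) - \plmax[\ba] = -\pload[\ba]$, which is exactly the injection in $\net$; at every remaining bus both injections vanish. Hence the balance equation transfers. The bound check is precisely where the parameters of $\net'$ were engineered to fit: the constraint $\plmin[\ba] \leq \pload[\ba] \leq \plmax[\ba]$ of $\net$ is equivalent to $0 \leq \plmax[\ba] - \pload[\ba] \leq \plmax[\ba] - \plmin[\ba]$, i.e. to the requirement that the new cost-$1$ generator's output lie between $0$ and its capacity $\plmax[\ba] - \plmin[\ba]$, while the original generators keep their bounds verbatim. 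So feasibility is preserved in both directions.

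Finally I would compute the objectives. Only the cost-$1$ generators contribute to the cost in $\net'$, so the image of a solution with loads $\pload$ has cost $\sum_{\ba \in \buses} (\plmax[\ba] - \pload[\ba]) = \sum_{\ba \in \buses} \plmax[\ba] - \sum_{\ba \in \buses} \pload[\ba]$, where I use that any bus with $\plmax[\ba] = 0$ also carries $\pload[\ba] = 0$ in $\net$. Thus maximizing $\sum_{\ba \in \buses} \pload[\ba]$ over the feasible solutions of $\net$ is the same as minimizing the dispatch cost over the feasible solutions of $\net'$ after subtracting it from $\sum_{\ba \in \buses} \plmax[\ba]$; taking the optimum on each side turns $\msf[\net]$ into $\sum_{\ba \in \buses} \plmax[\ba] - \ots[\net']$, which rearranges to the claimed identity. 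The only genuine obstacle is the bookkeeping in the bound check; everything else is a direct transfer of constraints.
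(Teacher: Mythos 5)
Your proposal is correct and follows exactly the encoding the paper itself describes (the paper only sketches it in prose before the lemma, without a formal proof): pin the demand via $\plmin' := \plmax$, keep the original generators at cost $0$, and place cost-$1$ generators of capacity $\plmax[\ba]-\plmin[\ba]$ at the load buses so that each unit of unserved demand costs exactly $1$. Your bus-by-bus verification of Kirchhoff's law and the bound bookkeeping supply precisely the details the paper omits, so this is the same argument, carried out rigorously.
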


\section{Cacti Networks}
In this section we present the result that the \feasn for cacti networks with a maximum degree of 3 is NP-hard.
To show this result, we first provide a type of network called \sche that allows us to encode decisions.

\begin{definition}[\schn, \scha]
Let $x \in \preals$. 
The \emph{\schn for $x$ with connector $v$} 
is the \dca
$\sch[x,v] := (\buses, \lines, \plmin, \plmax, \pgmax, \costs)$
define with
$\buses := \{g, l , v\}$;
$\lines := \{\edge[gl][2x][1], \edge[gv][x][1], \edge[vl][x][1] \}$;
$\plmin[l] := \plmax[l] := 3x$
$\pgmax[g] := 3x$.
Let \sch[x,v][+] be the version of \sch[x,v] where $v$ is an unbounded generator, $\plmax[v] = \infty$.
and 
\sch[x,v][-] be the version of \sch[x,v] where $v$ is an unbounded load, $\plmin[v] = 0$ and $\plmax[v] = \infty$.
\end{definition}

\pics[The \schn.]{
    \subpic{{\sch[x,v][-]}}{gsch}
    \subpic{$\pgen[v] = 0$}{gsch_msf_1}
    \subpic{$\pgen[v] = x$}{gsch_msf_2}
}[fg:gsch]
Figure~\ref{fg:gsch} shows a \schn and two solutions.
Figure~\ref{lem:sch} states that the \schn \sch[x,v][+] has two (significantly different) \feassolns; the solutions presented in Figure~\ref{fg:gsch} are optimal and the only ones possible (w.r.t. phase angle offsets); and that whenever the bus $v$ acts as a load
then there is no \feassoln.

\begin{lemma}
\label{lem:sch}
Let $x \in \preals$ and \sch[x,v] be the \scha.
We have 
\begin{align}
    \label{lem:sch:load}\{ \feassol \in \feassols[\sch[x,v][-]] \mid \pload[v] > 0\} &= \emptyset;\\
    \label{lem:sch:gen}\{ \pgen[v] \mid \feassol \in \feassols[\sch[x,v][+]]\} &= \{0, x\}.
\end{align}
\end{lemma}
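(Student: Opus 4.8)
The plan is to derive both identities from a single global balance observation combined with the capacity of the two lines incident to $l$. First I would sum Kirchhoff's conservation law $\sum_{\sedge \in \dlines \setminus \dlinesw} \pflow[\sedge] = \pgen[\ba] - \pload[\ba]$ over all buses $\ba \in \buses$. Since each active undirected line contributes both directions, and $\pflow[\sedge[\ba\bb]] + \pflow[\sedge[\bb\ba]] = \su(\pa[\bb]-\pa[\ba]) + \su(\pa[\ba]-\pa[\bb]) = 0$, the left-hand side vanishes. This yields the switching-independent identity $\sum_{\ba \in \buses} \pgen[\ba] = \sum_{\ba \in \buses} \pload[\ba]$ (total generation equals total load) for every \feassoln, no matter which lines are switched off.

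For (\ref{lem:sch:load}) I would apply this identity to \sch[x,v][-]. There the only generator is $g$ with $\pgen[g] \le \pgmax[g] = 3x$, while the total load is $\pload[l] + \pload[v] = 3x + \pload[v]$, using $\plmin[l] = \plmax[l] = 3x$. Balance forces $\pgen[g] = 3x + \pload[v]$, so $\pgen[g] \le 3x$ immediately gives $\pload[v] \le 0$, hence $\pload[v] = 0$; thus no \feassoln with $\pload[v] > 0$ exists.

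For (\ref{lem:sch:gen}) the key step is to pin down the phase angles. In \sch[x,v][+] the only load is $\pload[l] = 3x$, so the balance identity gives $\pgen[g] + \pgen[v] = 3x$. All of this $3x$ must reach $l$ through its incident active lines $\sedge[gl]$ and $\sedge[vl]$, whose capacities sum to exactly $2x + x = 3x$; hence both must be active and saturated into $l$. Fixing the gauge $\pa[l] = 0$, the saturation conditions determine $\pa[g]$ and $\pa[v]$ uniquely, so that $\sedge[gl]$ carries $2x$ and $\sedge[vl]$ carries $x$ into $l$. It then remains only to decide the status of the third line $\sedge[gv]$: if $\sedge[gv]$ is switched off, conservation at $v$ forces $\pgen[v] = x$; if it is kept, the additional flow $\pa[g]-\pa[v]$ (of magnitude exactly $x$, within capacity) supplies the outflow of $v$ and gives $\pgen[v] = 0$. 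Verifying that both configurations respect every capacity and generation bound shows both values are attained, so the set on the left of (\ref{lem:sch:gen}) is exactly $\{0,x\}$.

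The main obstacle I anticipate is the bookkeeping around switching: I must rule out every configuration in which a line incident to $l$ is switched off (these leave $l$ under-fed and hence infeasible), and confirm that once $\sedge[gl]$ and $\sedge[vl]$ are forced active and saturated the phase angles are uniquely fixed, so that the binary status of $\sedge[gv]$ is the only remaining degree of freedom. Care is also needed to check that the generator bound $\pgen[g] \le 3x$ holds in both surviving configurations (it yields $3x$ and $2x$ respectively) and that $v$ being an unbounded generator does not permit any intermediate value of $\pgen[v]$.
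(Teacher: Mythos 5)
Your proposal is correct, and it splits into two parts of different character relative to the paper's own proof. For claim~(\ref{lem:sch:gen}) you follow essentially the paper's route: the demand of $3x$ at $l$ against a total incident capacity of $3x$ forces both lines at $l$ to be unswitched and congested, which (up to the gauge choice at $l$) pins the phase angles of all three buses and leaves only the binary status of \sedge[gv] free; the unswitched case yields $\pgen[v] = 0$ and the switched case yields $\pgen[v] = x$, with both configurations checked feasible. (Incidentally, your case split correctly targets \sedge[gv]; the paper's second case literally reads $\linesw = \{\sedge[gl]\}$, which must be a typo, since \sedge[gl] was just shown unswitchable.) For claim~(\ref{lem:sch:load}), however, you take a genuinely different and arguably cleaner path: the paper derives it from the same angle analysis (in the unswitched case $v$ can neither generate nor consume; in the switched case $v$ must \emph{generate} $x$, so it cannot carry positive load), whereas you sum Kirchhoff's law over all buses, observe that opposite directed flows cancel on every active line regardless of \linesw, obtain the switching-invariant identity that total generation equals total load, and then conclude from $\pgen[g] \leq \pgmax[g] = 3x$ against the load $3x + \pload[v]$ that $\pload[v] = 0$. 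This buys robustness: your argument for (\ref{lem:sch:load}) needs no capacity or congestion reasoning and no case analysis on the switch set, and would survive changes to the line parameters as long as $\pgmax[g]$ equals the fixed demand at $l$. What the paper's unified angle argument buys instead is economy plus reuse: one computation serves both claims and produces the exact angles and flows that the reduction in Theorem~\ref{theo:cacti} later exploits. The bookkeeping obligations you flag --- ruling out any switching of the lines at $l$, confirming the flow of magnitude $x$ on \sedge[gv] sits exactly at (hence within) capacity, and verifying $\pgen[g]$ equals $3x$ resp.\ $2x$ in the two surviving configurations --- are precisely the right ones and all check out.
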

\begin{proof}
    Let \feassol be a \feassoln of \sch[x,v][+] or \sch[x,v][-]. 
    W.l.o.g. we assume that $\pa[l] = 2x$.
    The load has two lines with a total capacity of $3x$ and a demand of $3x$.
    Hence we can derive that all lines from $l$ must be congested and can therefore not be switched in any \feassoln.
    This implies flows $\pflow[gl] = 2x$ and $\pflow[vl] = x$
    and phase angles $\pa[g] = 0$ and $\pa[v] = x$.

    Let us assume that $\linesw = \emptyset$.
    The phase angles imply a flow of $x$ from $g$ to $v$.
    Kirchhoff's conservation law at $v$ is
    $\pflow[vg] + \pflow[vl] = -x + x = 0 = \pgen[v] - \pload[v]$.
    Hence if $v$ is a generator or a load it can neither generate nor consume power.

    Let us assume that $\linesw = \{\sedge[gl]\}$.
    Kirchhoff's conservation law at $v$ is
    $\pflow[vl] = x = \pgen[v] - \pload[v]$
    In order to fulfil this equation $v$ must be a generator with a generation of $x$.
\end{proof}

Using \schns we can show that cacti networks are NP-hard.

\begin{theorem}
    \label{theo:cacti}
    The \feasn for cacti networks with a maximum degree of 3 is NP-hard.
\end{theorem}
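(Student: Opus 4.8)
The plan is to reduce from \textsc{Subset Sum}, whose weak NP-hardness matches the (non-strong) hardness asserted here. Given positive integers $a_1,\dots,a_n$ and a target $t$, I would construct a cactus of maximum degree $3$ --- an \dca --- that admits a \feassoln if and only if some $I\subseteq\{1,\dots,n\}$ satisfies $\sum_{i\in I}a_i=t$, so that a decision procedure for \feasa would solve \textsc{Subset Sum}.

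For each element $a_i$ I would instantiate one choice network $\sch[a_i,v_i][+]$. By Lemma~\ref{lem:sch} its connector satisfies $\pgen[v_i]\in\{0,a_i\}$ in isolation (see~\eqref{lem:sch:gen}), and I read $\pgen[v_i]=a_i$ as the decision $i\in I$. The $n$ connectors are then tied together by a backbone designed so that (i) it introduces no additional cycle, whence each triangular choice network remains the only cycle through its own edges and the whole graph stays a cactus; and (ii) it gives every $v_i$ exactly one extra incident edge, raising its degree from $2$ to $3$ while keeping all remaining degrees at most $3$. Concretely the backbone is a binary-tree collector whose leaves are the connectors $v_i$ and whose root is a single load bus with demand $t$.

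The two directions would run as follows. For a yes-instance with witness $I$, I set the internal switch of each $\sch[a_i,v_i][+]$ according to membership in $I$, route the resulting $0/a_i$ injections through the collector to the root, and verify the capacity bounds and Kirchhoff's law edge by edge, exhibiting a \feassoln. Conversely, from any \feassoln I would argue --- this is the crux, elaborated below --- that Lemma~\ref{lem:sch} still confines each connector contribution to $\{0,a_i\}$; then, as the collector is acyclic, the flow it delivers to the root equals the sum of these contributions, and since that flow must meet the demand $t$, the set $I=\{i:\pgen[v_i]=a_i\}$ is the required witness, completing a valid reduction.

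The hard part will be the converse, precisely because Lemma~\ref{lem:sch} yields $\pgen[v_i]\in\{0,a_i\}$ only for an \emph{isolated} choice network. Once $v_i$ carries a collector edge and keeps an unbounded generator, the phase angle $\pa[v_i]$ gains a degree of freedom, so a priori its exported flow could be any value in an interval rather than one of two points. The central technical step is to fix the collector's capacities and susceptances --- and, if needed, to insert auxiliary $\sch[\cdot,\cdot][-]$ gadgets whose connectors cannot absorb power by~\eqref{lem:sch:load} --- so that this freedom is eliminated and each export is pinned to $0$ or $a_i$; concretely I expect to show that any export strictly inside $(0,a_i)$ forces either a capacity violation on the collector or an absorption forbidden by~\eqref{lem:sch:load}. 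Establishing this rigidity at all $n$ gadgets simultaneously, consistently with the single acyclic routing to the root, is where the bulk of the argument lies.
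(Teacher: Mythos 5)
There is a genuine gap, and it is exactly at the point you flag as the crux --- but it is not a repairable technicality; the gadget orientation in your architecture is backwards. Lemma~\ref{lem:sch} characterizes the connector as a \emph{one-way} interface: by~\eqref{lem:sch:load} the choice network can never export power through $v$, and by~\eqref{lem:sch:gen} it can only \emph{absorb} $0$ or $x$ (the quantity $\pgen[v]$ is power injected at $v$ into the triangle, i.e.\ drawn from whatever is attached at $v$). Your construction asks the connectors to \emph{supply} a root load of demand $t$ through the collector. If you embed the plain gadget \sch[a_i,v_i] (no actual generator at $v_i$), every produced network is infeasible regardless of the instance, since the gadgets are the only candidate sources and \eqref{lem:sch:load} forbids them from emitting. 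If instead you keep $\pgmax[v_i]=\infty$ as in \sch[a_i,v_i][+], feasibility becomes trivially \emph{yes} for every instance: the unbounded generator at $v_i$ can feed the collector directly, because the triangle's internal congestion constrains only the flow on the two gadget lines, not the generation of $v_i$ or its export along the collector edge. Either way the reduction fails to distinguish yes- from no-instances, and no tuning of collector capacities/susceptances or auxiliary \sch[\cdot,\cdot][-] gadgets can save it: on an acyclic collector, capacity bounds confine exports to \emph{intervals}, whereas the two-point set $\{0,a_i\}$ arises in the lemma from congestion forced by an internal demand ($3x$) that exactly matches internal generation ($3x$) --- a mechanism that operates only in the absorbing direction.

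The paper keeps your Subset-Sum skeleton but reverses the flow and adds the piece your proposal is missing: a mechanism forcing the \emph{exact} total. A single main generator $g$ feeds a main load $l$ whose demand $3+w$ equals the total capacity of its incident lines, so all of $l$'s lines must be congested, which pins the relevant phase angle differences; an argument via \eqref{lem:sch:load} then shows the line $\sedge[gx_0]$ cannot be switched, so exactly $w$ surplus units are pushed into a degree-$3$ chain $x_0,\dots,x_n$ (not a tree collector, though that difference is cosmetic) whose stub at $x_i$ attaches connector $v_i$. Each gadget absorbs $0$ or $x_i$ by \eqref{lem:sch:gen}, and Kirchhoff's law along the chain forces the absorbed amounts to sum to exactly $w$, yielding the witness set. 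Your tree-collector-to-a-load design has no analogue of this exactness argument: even if the $\{0,a_i\}$ behavior held, nothing would force the exports to total $t$ rather than merely be bounded by capacities. To fix your proof, invert the orientation (one generator supplies, gadgets absorb) and add a congestion gadget that forces the surplus entering the gadget backbone to equal $w$ exactly.
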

\begin{proof}
The proof is done by reduction from the \emph{Subset sum problem}. 
Given a finite set $M \subset \naturalnumbers$ and a $w \in \naturalnumbers$:
the {Subset sum problem} is to decide whether there is a $V \subseteq M$ such that $\sum_{x \in V} x = w$. 
If such a subset exists then we call the problem $(M,w)$ \emph{solvable}.

Let $(M,w)$ be an instance of the Subset sum problem with $M = \{ x_1, \ldots x_n \}$.
To encode the problem we use the network 
$\net[*][M,w] := (\buses, \lines, \plmin, \plmax, \pgmax)$ 
where 
$\buses := \{g, l, x_0\} \cup \bigcup_{1 \leq i \leq n} \{v_i, x_i\}$ 
$\lines := \{ \edge[gl][2+w][1], \edge[gv_0][w+1][1], \edge[v_0l][1][1]\} \cup \bigcup_{0 \leq i < n} \{ \edge[x_ix_{i+1}][w][1], \edge[x_{i+1}v_{i+1}][\infty][1]\}$;
$\pgmax[g] = 2+w$;
$\plmin[l] := \plmax[l] := 3 + w$;
We define $\net[M,w] := \net[*][M,w] + \sum_{1 \leq i \leq n} \sch[x_i,v_i]$ and we have: 
$$\feassols[\net[M,w]] \neq \emptyset \iff (M,w) \text{ is solvable}.$$
Figure~\ref{pic:cactus} shows an example encoding for $(M,w) = (\{1,2,3\}, 5)$ where 
the \sch[x_i,v_i] presented as black boxes.
\definecolor{two}{HTML}{40E0D0}
\definecolor{fuchs}{HTML}{FF00FF}
\definecolor{olive}{HTML}{808000}
\definecolor{orang}{HTML}{FFA500}
\definecolor{brow}{HTML}{A52A2A}	
\pics[Example for $(M,w) = (\{\textcolor{olive}{1}, \textcolor{two}{2}, \textcolor{fuchs}{3} \}, \textcolor{orang}{5})$]{
    \subpic{The network {\net[M,w]}}{cactus2}
    \subpic{A solution for {\net[M,w]}}{cactus2_mf}
}[pic:cactus]

It is easy to see that this reduction is polynomial and because the networks \sch[x,v] are cacti, that the network \net[M,w] is a cactus.\\

\noindent
Case 1: $\feassols[\net[M,w]] \neq \emptyset \implies (M,w) \text{ is solvable}$.
Let \feassol be a \feassoln of \net[M,w].
The sum of capacities of all lines connected to $l$ is $3 + w$.
Hence, to satisfy the demand of $3 + w$ all lines have to be congested.
This implies phase angle differences $\pa[g] - \pa[l] = w+2$ and $\pa[x_0] - \pa[l] = 1$.

Assume that the line \sedge[gx_0] is switched, so $\sedge[gx_0] \in \linesw$.
Because there is a flow of $1$ on the line \sedge[x_0l]
there has to be a flow of $1$ coming from one of the \schns.
If we regard the \schns as black boxes then
Lemma~\ref{lem:sch:load} from Lemma~\ref{lem:sch} implies that none of the \schns can act as a generator.
Hence we have a contradiction.
That implies that the line \sedge[gx_0] is not switched.
For its flow we have 
$\pflow[\sedge[x_0g]] = \pa[g] - \pa[x_0] = \pa[g] - \pa[l] - (\pa[x_0] - \pa[l]) = w+2-1 = w+1$.
Because there is a flow of $1$ going from $x_0$ to $l$ there is a flow of $w$ going towards the \schns.
Lemma~\ref{lem:sch:gen} implies that in order to satisfy the demand of the \schns the buses $v_i$ ($i>1$) 
have to consume either $0$ or $x_i$.
We call a \schn which consumes $x$ \emph{active}.
Let $V := \{ x \in M \mid \sch[x_i,v_i] \text{ is active}\}$.
Because \feassol is a \feassoln we know that all Kirchhoff's conservation laws are satisfied and hence the \schns have to consume the flow of $w$ coming from $x_0$.
Therefore we have $\sum_{x \in V} = w$.

\noindent
Case 2: $\feassols[\net[M,w]] \neq \emptyset \Longleftarrow (M,w) \text{ is solvable}$.
Let $V \subseteq M$ with $\sum_{x \in V} = w$.
Our observations above imply that by activating all \schns \sch[x_i, v_i] with $x_i \in V$ we get a \feassoln.
\end{proof}

Cacti networks are simpler than real power networks.
From the results of this section, we conclude that the switching problem are also hard for real world power networks.

\section{N-level Tree Networks}
N-level Tree networks are a possible relaxation of trees.
In this section we show that the \msfa problem for 2-level tree networks are NP-hard.
Lemma~\ref{lem:mots} then implies that this result is also true for the \otsa problem.
An n-level Tree network is an \dca based on a tree where there is one generator at the root and loads at the leaves.
Lines that are not part of the tree can only be added between buses on the same tree level where the level is less or equal to $n$ and only such that the resulting graph is planar.
This network structure is motivated by the disaster management application.
After the destruction of many power lines, it is easier to first repair lines such that we obtain a tree structure.
Then we can start restoring additional lines.
\begin{definition}[n-level Tree network]
	Let $n \in \naturalnumbers$. 
    An \emph{n-level Tree network} is an \dca iff there exists a sub-network $T$ that is a Tree such that:
    all leaves of $T$ are loads;
    there is only one generator at the root node of $T$ and
    there is a total order on the children of every node (which implies a total order on all nodes in one level) such that
    every node of the same tree level can only be connected to its neighbours in the total order on all nodes of the same level.
\end{definition}

\begin{theorem}
    \label{theo:tree}
The \msfa problem for 2-level Tree networks is NP-hard.
\end{theorem}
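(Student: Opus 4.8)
The plan is to reduce from the Subset Sum problem and to re-use the choice-network mechanism of Lemma~\ref{lem:sch}, exactly as in the proof of Theorem~\ref{theo:cacti}. The obstacle that makes the 2-level Tree case genuinely different is structural: such a network has a \emph{single} generator (the root), every load sits at a leaf, and an added line may only join two order-neighbours on the same level. Hence the self-contained gadgets \sch[x_i,v_i], each of which carries its own generator, cannot simply be dropped in. The first step is to identify the cycles a 2-level Tree permits: apart from longer cycles, the only triangles available are a node together with two of its order-adjacent children, namely the root with two level-1 neighbours, or a level-1 node $p$ with two of its leaves. I would use the latter, turning $p$ into the apex that plays the role of the generator of \sch[x_i,v_i] but fed from the root through the tree edge into $p$, making one leaf the genuine load and the other the connector $v_i$, joined to the load-leaf by the permitted same-level line. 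The leaf demand and the line capacities are set, exactly as in Lemma~\ref{lem:sch}, so that serving the demand congests every leaf line, pins the phase angles, and leaves a single binary degree of freedom.

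To couple the gadgets into a Subset Sum instance I would lay out the level-2 connectors $v_1,\dots,v_n$ as order-neighbours and join consecutive connectors by same-level lines of capacity $w$, reproducing the chain of Theorem~\ref{theo:cacti} but now living entirely on level $2$. A main load together with suitably chosen capacities forces the root to push exactly $w$ units into this connector chain; Kirchhoff's law along the chain then forces the set of \emph{active} gadgets (those whose connector carries $x_i$ rather than $0$) to satisfy $\sum_{x_i \text{ active}} x_i = w$. Fixing each leaf demand so that $\plmin = \plmax$ and taking the \msfa threshold equal to the total leaf demand, Lemma~\ref{lem:feas} makes ``all demand can be served'' equivalent to ``$\feassols \neq \emptyset$'', and the two directions then read off as in Theorem~\ref{theo:cacti}: a feasible configuration exists iff some subset of $M$ sums to $w$. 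Lemma~\ref{lem:mots} transfers the hardness to \otsa. I would finish by checking that the construction is polynomial and is a bona fide 2-level Tree: the tree is the root, its children, and their leaves, while every added line joins order-neighbours inside one level, so the graph is planar and respects the level constraint.

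The hard part will be the gadget lemma, the analogue of Lemma~\ref{lem:sch} for this embedding. Unlike \sch[x_i,v_i], the apex $p$ is not a private generator: all of its power arrives through the single root, so its phase angle is coupled to the rest of the network and cannot be fixed in isolation. I must therefore show that the congestion forced at the leaves still collapses the continuum of feasible phase angles to the clean choice $\{0,x_i\}$, and, crucially, that neighbouring gadgets — which share the common apex structure through the root and are linked along the level-2 chain — do not interact to create extra feasible flows that would let a non-solution subset spuriously meet the demand. Making the leaf demands, the chain capacities of $w$, and the triangle susceptances interlock so that this rigidity holds simultaneously across all gadgets is where the real effort lies.
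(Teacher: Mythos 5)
Your skeleton (Subset Sum, a chain carrying $w$, binary gadgets counted by Kirchhoff at a hub) matches the paper's, and your observation about which triangles a 2-level tree admits is correct. But the step you defer to the end --- the analogue of Lemma~\ref{lem:sch} for a root-fed apex --- is not merely ``where the real effort lies''; it is where the approach breaks, and the paper's actual proof abandons the \sch[x_i,v_i] gadget entirely rather than embed it. Concretely: (i) with a single generator, the feed line from $g$ to each apex $p_i$ carries $3x_i$ or $2x_i$ depending on the gadget's mode, so its phase drop is mode-dependent and $\pa[v_i]-\pa[g]$ is completely pinned by the mode vector (congestion at the gadget load fixes all internal offsets, and in the LDC model a line's flow is susceptance times angle difference with no slack). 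The level-2 chain lines between consecutive connectors must then simultaneously carry flows dictated by these forced angle differences (a function of the two adjacent local modes) and flows dictated by Kirchhoff, namely $w$ minus prefix sums of the active $x_j$. These depend on the modes in structurally different ways, so the system is over-determined: for a valid subset $V$ the intended configuration generally has no consistent phase-angle extension, i.e.\ the direction ``$(M,w)$ solvable $\implies$ feasible'' fails, and no fixed choice of chain susceptances repairs it. Your decision to set $\plmin=\plmax$ at every leaf and invoke Lemma~\ref{lem:feas} removes the one source of slack that could absorb the mismatch. (ii) You also never say how the root legally injects exactly $w$ into the level-2 chain: a direct line from $g$ to a level-2 bus is neither a tree edge nor a same-level line, and routing $w$ through a gadget apex is impossible because congestion-pinning forces the apex--connector line to carry exactly $x_i$.

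The paper's construction solves both problems without any two-mode gadget. It pins the level-2 angles \emph{unconditionally}, independent of all choices: congested backbone lines \edge[a_{i-1}a_i][m][m] of huge susceptance, fed at \emph{both} ends through two-hop congested feeds \edge[gg_1][m+1][2m+2], \edge[g_1a_1][m+1][2m+2] and \edge[gg_{n+1}][1][\frac{2}{n+1}], \edge[g_{n+1}a_{n+1}][1][\frac{2}{n+1}] (two hops precisely because a root-to-leaf line is illegal --- your problem (ii)), forcing $\pa[a_i]=i$. The binary choice is then simply whether the hub line \edge[ta_i][a_i][\frac{a_i}{i}] is switched off; its susceptance $\frac{a_i}{i}$ is calibrated so that, given the pinned angles, being in service means carrying exactly $a_i$. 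The hub $t$ is fed by the congested line \edge[gt][w][w], so Kirchhoff at $t$ counts the surviving lines to exactly $w$, and the committed power is absorbed by \emph{flexible} loads with $\plmax[l_i]=a_i$ and $\plmin[l_i]=0$, with the \msfa threshold $m+2+w$ (rather than fixed demands plus feasibility) doing the forcing in both directions. If you want to salvage your plan, you would have to rediscover essentially these ingredients: unconditional angle-pinning so that gadget choices decouple, choice encoded directly in line switching with susceptances calibrated to position, and elastic loads supplying the slack your fixed-demand design lacks.
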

\begin{proof}
    We prove this reduction from a version of the \emph{subset sum} problem.
    Given an instance $(M, w)$, let $m := 1 + \sum_{x \in M} x$ and $M = \{a_2, \ldots, a_n\}$.
    We use $a_i$ to represent a value from $M$ as well as a symbol corresponding to that value.
    The network \net[][M,w] is defined by 
    $\net[M,w] := (\buses, \lines, \plmin, \plmax, \pgmax)$ with 
    $\buses := \{g\} \cup \{l_i \mid 1 \leq i \leq n+1 \} \cup \{g_1, g_{n+1}, t, a_1, a_{n+1} \} \cup M$; 
    $\lines := 
    \bigcup_{1 \leq i \leq n} \allowbreak\{ \edge[pa_i][a_i][\frac{a_i}{i}],\allowbreak \edge[a_il_i][a_i][1], \allowbreak\edge[a_{i-1}a_i][m][m] \} 
    \cup
    \{ \edge[gg_1][m+1][2m+2], \edge[g_1a_1][m+1][2m+2],\allowbreak \edge[a_1l_1][1][1], \edge[gt][w][w], \edge[gg_{n+1}][1][\frac{2}{n+1}],\allowbreak \edge[g_{n+1}a_{n+1}][1][\frac{2}{n+1}], \edge[a_{n+1}l_{n+1}][m + 1][1], \edge[a_na_{n+1}][m][m] \}$;
    $\pgmax[y] = \infty$;
    $\forall 1 \leq i \leq n+1: 
    \plmax[l_i] := a_i$.
    We have 
    $$\msf[\net[M,w]] = m + 2 + w \iff (M,w) \text{ is solvable}.$$
    
    An example encoding for $(M,w) = (\{2,1,3\}, 5)$ can be found in Figure~\ref{theo:tree}.
    \pics[Example for $(M,w) = (\{2,1,3\}, 5)$]{
        \subpic{The network \net[M,w]}{treelevel}
        \subpic{A solution for \net[M,w]}{treelevel_msf}
    }[pic:tree]
	It is easy to see that \net[M,w] is a 2-level Tree network. 

	\textbf{Case 1:}$\msf[\net[M,w]] = m + 2 + w \implies (M,w) \text{ is solvable}$.
	Let \feassol be an optimal solution of the \msfa. 
    W.l.o.g. let $\pa[g] = 0$. 
    Since the max flow is $m + 2 + w$ we know that the lines \sedge[gg_1], \sedge[g_1a_1], \sedge[gg_{n+1}], \sedge[g_{n+1}a_{n+1}] and \sedge[gt] are congested and therefore $\pa[a_1] = \pa[t] = 1$ and $\pa[a_{n+1}] = 2 \frac{n+1}{2} = n+1$. 
    This implies that we have at least $m + 2$ incoming power at the bus $a_1$. 
    Since the other two lines have in sum a capacity of $m + 2$ we know that they are congested. 
    Therefore, we obtain $\pa[a_2] = 2$. 
    For the bus $a_n$ we know that the phase angle can not be bigger then $n$ because that would overload the line \edge[ta_i][a_i][\frac{a_i}{i}]. 
    However, if the phase angle is smaller then $n$, then the line \edge[a_na_{n+1}][m][m] is overloaded. 
    Therefore, $\pa[a_n] = n$. 
    We can apply similar arguments to $a_{n-1}$ to have a phase angle of $n-1$. 
    Overall, we derive that $\forall 1 \leq i \leq n: \pa[a_i] = i$. 
    Hence, the lines \edge[a_{i-1}a_i][m][m] for $1 \leq i \leq n+1$ must be congested. 
    They also cannot be in \linesw since their flow $m$ is greater then the sum of elements of $M$ an therefore the sum of power we can send to the loads $l_i$ and can get from the buses $a_i$  with $1 \leq i \leq n$.
		
	We define $V := \{ a_i \in M \mid \sedge[ta_i] \notin \linesw \}.$ We know that the incoming power at bus $t$ is $w$, $t$ respects Kirchhoff's conservation law and that all lines $\sedge[ta_i] \notin \linesw$ are congested. Therefore $\sum_{x \in M} x = w$.\\
	\newline
	\textbf{Case 2:}$(M,w) \text{ is solvable} \implies \msf[\net[M,w]] = m + 2 + w$.
	Let $V$ be a solution of $(M,w)$. We define phase angle \pas with $\pa[g] := 0, \pa[g_1] := \frac{1}{2}, \pa[g_{n+1}] := \frac{n+1}{2}, \pa[a_i] := i, \pa[l_1] := 2, \pa[l_{n+1}] ;= n+2+m$ and 	$\forall 1 \leq i \leq n:$
	$$\pa[l_i] := 
		\begin{cases}
			i + a_i		&\text{if } a_i \in V\\
			i			&\text{otherwise}.
		\end{cases}
	$$
	We also define $\linesw := \{ \sedge[ta_i] \in \lines \mid a_i \notin V \}$. Since the sum of all elements of $V$ is $w$, we know that with this definition $t$ respects Kirchhoff's conservation law and it is easy to see that all other buses do the same. This definition also implies a flow of $m + 2 + w$ which is the \msfa because all lines of the generator are congested.
\end{proof}

\section{Approximation Complexity}
This section presents the proof that the \msfa and the \otsa problem can not be approximated within any constant factor (they are non-APX).
First we present a reduction of the Longest Path problem to the \msfa problem such that every approximation algorithm for the \msfa would imply an approximation result for Longest Path.
This reduction implies that the switching problem can not be approximated in polynomial time within any constant factor.
The Longest Path problem is: given two buses $a$ and $b$ in a graph, find a path that starts in $a$, ends in $b$ and visits every bus at most once and is maximal in terms of the number of buses visited.
It is known that for all $\epsilon >0$, it is not possible to approximate Longest Path to within a factor of $2^{(\log n)^{1-\epsilon}}$ unless NP is contained within quasi-polynomial deterministic time \cite{karger1997approximating}.

\begin{theorem}
\label{theo:approx}
It is not possible to approximate the \msfa problem within a factor of $2^{(\log n)^{1-\epsilon}}$ unless NP is contained within quasi-polynomial deterministic time.
\end{theorem}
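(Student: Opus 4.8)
The plan is to reduce the Longest Path problem to the \msfa problem in an approximation-preserving way, so that an approximation algorithm for \msfa with ratio $2^{(\log n)^{1-\epsilon}}$ would yield the same ratio for Longest Path, contradicting the hardness result of \cite{karger1997approximating}. Given a graph with source $a$ and sink $b$, I would build a network \net whose \feassolns correspond to simple $a$--$b$ paths, and whose served load is (up to a fixed multiplicative factor) proportional to the number of buses on the selected path. The intuition to exploit is the one already used throughout the paper: a congested line fixes the phase-angle difference across its endpoints, and in the Linear DC model phase angles accumulate additively along a path, so a path using $k$ edges forces a phase-angle drop proportional to $k$ between $a$ and $b$. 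Switching decisions then amount to selecting which edges lie on the active path.

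First I would replace each undirected edge of the input graph by a gadget (analogous to the \schn construction) so that the only feasible way to route the probe flow from $a$ to $b$ is to switch all lines off except those forming a single simple path; the structure should penalise revisiting a bus, enforcing simplicity, and should make the total amount of load served at a collection of ``meter'' buses equal to a quantity monotone in the path length. Concretely, I would attach to each vertex a small load that can only be served when that vertex is carried on the active path, so the served demand is $\Theta(k)$ for a path through $k$ vertices. Second, I would verify the two directions of the reduction: a simple $a$--$b$ path of length $k$ yields a \feassoln serving load $f(k)$, and conversely any \feassoln serving load $f(k)$ induces a simple $a$--$b$ path of length $k$, the latter using congestion and Kirchhoff's law exactly as in the proofs of Theorem~\ref{theo:cacti} and Theorem~\ref{theo:tree}.

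The key quantitative step is that the reduction must be \emph{linear} (or at least ratio-preserving): if $\mathrm{OPT}_{\mathrm{LP}}$ is the optimal Longest Path length and $\msf = c \cdot \mathrm{OPT}_{\mathrm{LP}}$ for a fixed constant $c$ independent of the instance, then an $\alpha$-approximation for \msfa returns a solution of value at least $\msf / \alpha = c\cdot \mathrm{OPT}_{\mathrm{LP}} / \alpha$, from which a path of length at least $\mathrm{OPT}_{\mathrm{LP}} / \alpha$ is recovered, giving an $\alpha$-approximation for Longest Path. Since the gadget blow-up is polynomial, the parameter $n$ in the two problems differs only by a polynomial factor, and $2^{(\log n)^{1-\epsilon}}$ is preserved under such a substitution (by adjusting $\epsilon$), so the inapproximability transfers.

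I expect the main obstacle to be enforcing \emph{path simplicity} purely through capacity and susceptance constraints: unlike in an ordinary flow network, phase angles are global and a naive gadget might permit the probe flow to split, to traverse a cycle, or to realise a given phase-angle drop along a non-simple walk, thereby overcounting served load and breaking the correspondence with simple paths. The remedy I would pursue is to make each vertex gadget's served load \emph{saturate} once the vertex is touched (so revisits add nothing) while simultaneously forcing all lines incident to the load to be congested, as in Lemma~\ref{lem:sch}, so that a consistent assignment of phase angles exists if and only if the switched-on lines form a single simple $a$--$b$ path. Getting this gadget to be robust against flow-splitting, and checking that the converse direction genuinely extracts a simple path rather than an arbitrary connected subgraph, is where the real care is needed.
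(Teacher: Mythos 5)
You correctly pick the paper's source problem (the $a$--$b$ Longest Path problem with the $2^{(\log n)^{1-\epsilon}}$ bound of Karger et al.) and the right ratio-preserving framing, but your central construction has a gap that you flag yourself and do not close, and it is fatal rather than a matter of care. You propose per-vertex loads so that the served demand is $\Theta(k)$ for a path through $k$ vertices, together with gadgets forcing every \feassoln to consist of exactly one simple $a$--$b$ path. In the LDC model, however, acyclic subgraphs impose no cyclic phase-angle constraints --- as the paper itself notes, on trees the LDC flow can match the ordinary max flow. So an algorithm can switch the graph down to a \emph{spanning tree} and serve the small load at \emph{every} vertex simultaneously, even when the longest $a$--$b$ path is short (think of a broom- or star-shaped graph): ``served load proportional to vertices touched'' is maximized by branching subgraphs, not simple paths, which destroys the claimed identity $\msf = c\cdot\mathrm{OPT}_{\mathrm{LP}}$. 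Your saturation idea only prevents revisits from overcounting; it does not penalize branching. Moreover, an exact correspondence between \feassolns and simple paths cannot be enforced at all: switching is unconstrained, loads with zero minimum demand may be served partially or not at all, and dangling zero-flow lines can always be left on, so at best one can hope for a statement about the \emph{value} of solutions plus a certificate-extraction step --- which is precisely what your proposal is missing.

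The paper's proof needs neither simplicity gadgets nor per-vertex loads; the missing idea is a phase-angle \emph{measuring and amplification} device attached at the endpoints only. It adds a line \sedge[gl] of susceptance $\frac{1}{n+1}$, whose flow is the scaled phase-angle drop between $g$ and $l$; since each unit line of $G$ allows a drop of at most $1$, this drop telescopes along \emph{any} surviving $a$--$b$ path, giving $\pflow[gl] \leq \frac{t+2}{n+1}$ where $t$ may be taken as the length of the \emph{shortest} surviving path. A minimum demand of $3$ at $l$, fed through two unit-capacity lines and a triangle $g', l', l$ containing a susceptance-$n$ line \sedge[g'l'], converts this slack into the bound $\pgen[g] + \pgen[g'] \leq 3 + t$, with equality achievable exactly by switching down to a single longest path, so $\msf[\net[G]] = 3 + t_G$. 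Because the upper bound holds for \emph{every} path in the surviving subgraph, certificate extraction is trivial: from an $\epsilon$-approximate solution one takes, say, a shortest $a$--$b$ path in what remains and concludes $t \geq \epsilon t_G - 3$, the additive loss being absorbed into the ratio. Your proposal would need to invent an equivalent of this amplification-plus-extraction mechanism; without it, the reduction as you outline it does not go through.
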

\begin{proof}
The proof is done by reduction from the $a-b$ Longest Path problem.
Given a graph $G = (\buses[h],\lines[h])$ with $\buses[h] = \{v_1, \ldots, v_n\}$, $a=v_1$ and $b=v_n$. 

Let $\net[G] = (\buses, \lines, \plmin, \plmax, \pgmax)$ with
$\buses := \buses[h] \cup \{g, l, g', l'\}$;
$\pgmax[g] := \pgmax[g'] := \infty$;
$\plmax[l] := \plmax[l'] := \infty$;
$\plmin[l] := 3$;
$\lines := \{ \edge[cd][1][1] \mid \sedge[cd] \in \lines[h]\} \cup \{\edge[ga][1][1], \edge[bl][1][1], \edge[gl][\frac{1}{n+1}][1], \edge[g'l'][n][n], \edge[g'l][1][1], \edge[l'l][1][1]\}$.
\pic[Example for {\net[G]} with $G = (\{a, b, c, d\}, \{\edge[a][c], \edge[a][d], \edge[b][c], \edge[b][c], \edge[c][d]\})$.]{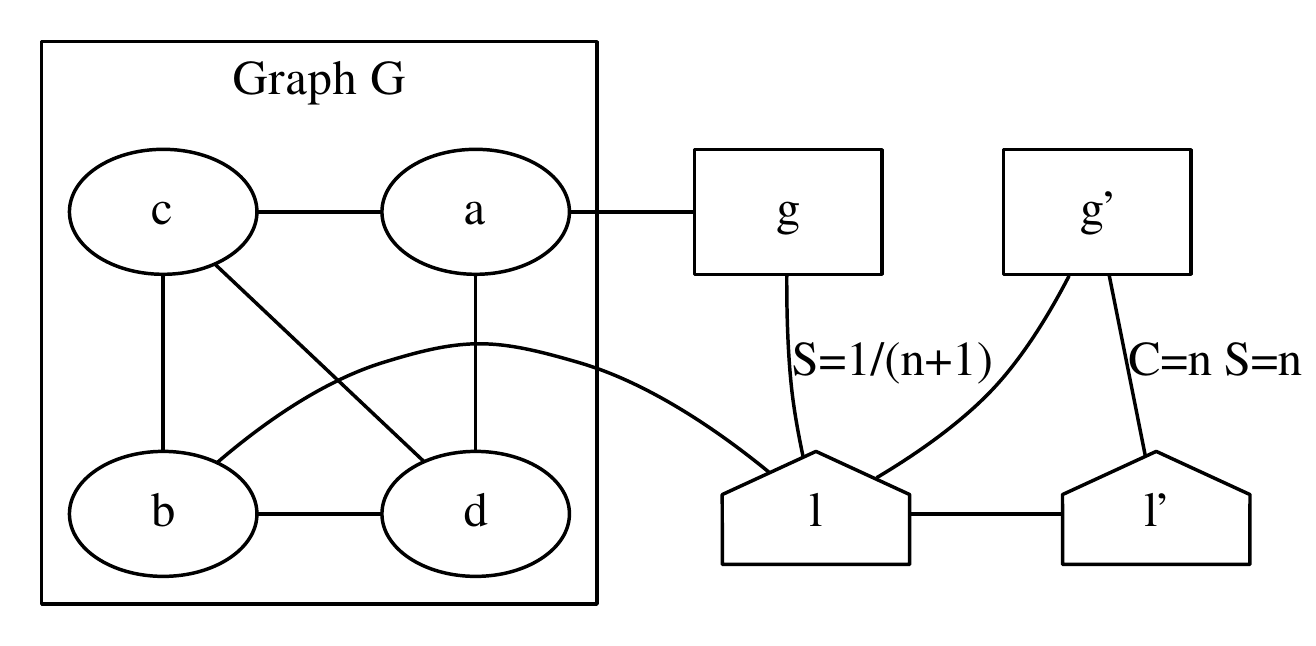}

Before we present the formal proof, we outline the idea.
It holds that for every $\epsilon$-approximation algorithm for the $MSF$ problem and every $1 > \theta > 0$ there is an $\epsilon \theta$-approximation algorithm for the Longest Path problem.
The connection between Longest Path and the \msfa of \net[G] is as follows.
The line \sedge[g'l'] is the line that delivers the majority of power in this network.
To maximize its flow, we have to maximize the phase angle difference between $g'$ and $l'$.
The phase angle difference is limited by the path from $g'$ through $l$ to $l'$.
The load $l$ has minimum demand of $3$. 
This demand has to be satisfied by power coming from $g$ and $g'$.
The phase angle difference is the bigger the less power we have to deliver from $g'$ to $l$.
To maximize the flow along the line \sedge[gl] we have to maximize the phase angle difference between $g$ and $l$.
Switching off lines in $G$ allows to increase the phase angle difference.
It is maximal when switching off lines in $G$ results in a sub-graph that is a Longest Path.
Any $\epsilon$-approximation for the \msfa would produces a switching in $G$. Using the phase angle difference between $g$ and $l$ we can derive a lower bound for the Longest Path.

The minimum demand at $l$ implies 
\begin{equation}
    \label{u}
    3 \leq \pflow[gl] + \pflow[bl] + \pflow[g'l] + \pflow[l'l].
\end{equation}
Hence $\pflow[gl] + \pflow[bl] \geq 1$.

Let us assume that $\pflow[gl] + \pflow[bl] = 1$. This implies that \sedge[g'l] and \sedge[l'l] have to be congested. That implies that the phase angles of $g'$ and $l'$ are the same and therefore there is no flow on \sedge[g'l']. Hence Kirchhoff's Junction law will be violated at $l'$. Therefore we can assume that $\pflow[gl] + \pflow[bl] > 1$. This implies that the line \sedge[gl] will never be switched and that there is always a path from $g$ to $l$ via the lines of the graph $G$.

In every feasible solution, the switching of lines $G$ will create a subgraph $W = (\buses[h], \lines[W])$ with $\lines[W] \subseteq \lines[h]$. 
Our observations above prove that there is at least one path $p$ from $a$ to $b$ in $W$. 
Let $t$ be the length of $p$.
In the following we show upper bounds on the generation of $g$ and $g'$ depending on $t$.

The DC power law implies that every line in that graph allows for a maximum phase angle difference of $1$. Therefore, the maximum phase angle difference between $g$ and $l$ with respect to $p$ is bounded by $t+2$, so $\pa[g]- \pa[l] \leq t+2$. This implies
\begin{equation}
    \label{c}
\pflow[gl] \leq \frac{t+2}{n+1}
\end{equation}
and since the line \sedge[bl] has a capacity of $1$ we have 
\begin{equation}
    \label{d}
\pgen[g] \leq 1 + \frac{t + 2}{n+1}.
\end{equation}
We can achieve equality, if and only if $W$ contains only one path from $a$ to $b$. 

Given the capacities of $\pflow[bl]$ and $\pflow[g'l]$ are $1$ and Equation~\ref{u} we have
\begin{equation}
    \label{b}
    -\pflow[l'l] \leq \pflow[gl] + \pflow[bl] + \pflow[g'l] - 3 \leq \pflow[gl] - 1.
\end{equation}

Kirchhoff's Conservation Law for the triangle $l, g', l'$ is $\pflow[g'l] = \frac{\pflow[g'l']}{n} + \pflow[l'l]$ which together with Equation~\ref{b} and Equation~\ref{c} implies
\begin{equation}
\label{a}
\pflow[g'l'] = n(\pflow[g'l] - \pflow[l'l]) \leq n\pflow[gl] \leq n\frac{t+2}{n+1}.
\end{equation}

Finally, we have
\begin{equation}
    \label{e}
    \pgen[g'] = \pflow[g'l] + \pflow[g'l'] \leq 1 + n\frac{t+2}{n+1}
\end{equation}
where we can achieve equality if and only if $\pflow[bl] = 1$.

Combining Equation~\ref{d} and Equation~\ref{e} shows that 
\begin{equation}
    \label{t}
\pgen[g] + \pgen[g'] \leq 3 + t
\end{equation}
and we have equality if and only if $W$ consists of one path only.
The later implies that $\msf[\net[G]] = 3 + t_G$ where $t_G$ is the length of a longest path in $G$.

Let $1 > \epsilon > 0$ and lets assume that we have an $\epsilon$-approximation algorithm $A$ for \msf[\net[G]] that provides us with a feasible solution and its total generation $A(\net[G])$ such that 
$$\epsilon (3 + t_G) = \epsilon \msf[\net[G]] \leq A(\net[G]) \leq \msf[\net[G]].$$

Let $t$ be the length an arbitrary path from $a$ to $b$ through $G$ in the solution of $A$ that we can find in polynomial time (for example the shortest path).
Equation~\ref{t} implies that $\epsilon (3 + t_G) \leq A(\net[G]) \leq 3 + t$ and hence 
$$\epsilon t_G - 3 \leq \epsilon t_G - 3(1-\epsilon) \leq t.$$
This shows that for every $\epsilon$-approximation algorithm for the $MSF$ problem and every $1 > \theta > 0$ there is an $\epsilon \theta$-approximation algorithm for the longest path problem.
\end{proof}

It follows a reduction from the minimum 3-dimensional assignment to the \otsa problem.

\begin{theorem}
\label{theo:approxots}
The \otsa problem is non-APX.
\end{theorem}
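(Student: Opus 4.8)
\section*{Proof proposal}

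The plan is to reduce from the \emph{minimum 3-dimensional assignment} problem and to exploit a zero/positive cost gap so strong that \emph{no} finite multiplicative ratio is achievable. Recall that an instance consists of three disjoint ground sets $X, Y, Z$ with $|X| = |Y| = |Z| = n$, a collection of admissible triples $T \subseteq X \times Y \times Z$, and the task of selecting $n$ pairwise-disjoint triples covering $X \cup Y \cup Z$. Assigning cost $0$ to admissible triples and charging any uncovered ground element at positive cost, the optimum equals $0$ precisely when the underlying 3-dimensional matching instance admits a perfect matching; since 3-dimensional matching is NP-complete, deciding whether the optimum is $0$ is NP-hard. Hence it suffices to build, from such an instance, an \dca whose \otsa value is $0$ if and only if a perfect matching exists and is bounded below by a fixed positive constant otherwise.

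First I would design the network with a single load $l$ of fixed demand $3n$ (one unit of demand charged to each ground element), a family of cost-$0$ generators feeding $l$ through triple-gadgets, and one expensive generator $g_0$ of unit cost whose sole purpose is to cover any demand the free generators cannot supply, thereby guaranteeing feasibility so that $\ots$ is always finite. Each admissible triple $t \in T$ would be represented by a switchable gadget, modelled on the choice network of Lemma~\ref{lem:sch}, whose two \feassolns correspond to ``$t$ is selected'' and ``$t$ is not selected''; when selected, the gadget routes exactly one unit toward each of the three buses representing the elements of $t$. I would then enforce the exact-cover constraints by the congestion argument of Theorem~\ref{theo:cacti}: for every ground element I place a line whose capacity and susceptance force, through Kirchhoff's conservation law and the DC power law, that exactly one selected gadget delivers to that element. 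Consequently a family of selected triples yields a \feassoln that serves all of $l$ from free generation alone, so that $g_0$ is unused and $\ots = 0$, if and only if the selected triples form a perfect matching; if no perfect matching exists, at least one unit of demand must be drawn from $g_0$, so $\ots \geq 1$.

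The gap argument then gives non-APX in the strongest sense. Suppose $A$ is a polynomial-time approximation algorithm with \emph{any} ratio $\rho \geq 1$. On a yes-instance $\ots = 0$, so $A$ must return a dispatch of cost at most $\rho \cdot 0 = 0$; on a no-instance $A$ returns a value of at least $\ots \geq 1 > 0$. Comparing the returned cost against $0$ therefore decides 3-dimensional matching in polynomial time, which is impossible unless $\mathrm{P} = \mathrm{NP}$; in particular no constant-factor approximation exists. The main obstacle is the middle step: engineering the triple-gadgets and the per-element forcing lines so that the \emph{cyclic} phase-angle constraints of the DC model, rather than flow conservation alone, exactly reproduce the exact-cover condition. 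The choice network of Lemma~\ref{lem:sch} supplies the binary selection, but making ``covered exactly once'' robust against alternative phase-angle assignments --- as opposed to the purely additive subset-sum forcing of Theorem~\ref{theo:cacti} --- is the delicate part of the construction.
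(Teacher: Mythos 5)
Your high-level strategy is legitimate and actually differs from the paper's. The paper gives a \emph{value-preserving} reduction: from an instance $(X,Y,W,d)$ of minimum 3-dimensional assignment it builds a network with $\ots[\net[X,Y,W,d]] = M3DA(X,Y,W,d)$ exactly, and then inherits non-APX wholesale from the known inapproximability of M3DA \cite{crama1996approximation}. You instead propose a gap reduction from the 3-dimensional matching decision problem, engineering $\ots = 0$ on yes-instances and $\ots \geq 1$ on no-instances, so that any multiplicative ratio would decide 3DM. If completed, that would be self-contained (no reliance on the M3DA citation) and would rule out every factor, not just constants; the only definitional wrinkle is that a zero optimum makes the ratio degenerate under conventions requiring positive objective values, which is patchable by a small baseline cost.

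The genuine gap is that the entire network construction --- which \emph{is} the paper's proof --- is missing, and you concede as much in your final sentence. The ``covered exactly once'' condition is not obtainable by gluing Lemma~\ref{lem:sch} gadgets to the additive forcing of Theorem~\ref{theo:cacti}: a choice network outputs $0$ or $x$ at a \emph{single} connector $v$, whereas a triple must feed three \emph{distinct} element buses, and in the DC model the split of $v$'s output is dictated by susceptances and phase angles, not by choice. You must therefore exclude (i) a gadget delivering to only a proper subset of its triple's elements, and (ii) an element absorbing flow from two selected gadgets with the excess recirculating elsewhere --- and flow conservation alone does not exclude either. The paper resolves exactly this with a per-triple distribution hub $t_d$, lines \sedge[tt_d] and \sedge[t_dr] of calibrated capacities, and the observation that all lines into the load $l$ must be congested, which pins the phase angles of every $r$ and $t$; a phase-angle computation then shows a second selected triple would force flow against an impossible angle difference, yielding the contradiction, while each unselected $t$ covers its implicit demand for free via $t_g$. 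Until you supply an analogous construction and carry out that phase-angle case analysis, the equivalence ``$\ots = 0$ iff a perfect matching exists'' is an unproven claim, so the proposal as it stands does not establish the theorem.
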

\begin{proof}
    The proof uses a reduction or the \emph{Minimum 3-Dimensional Assignment} which is non-APX \cite{crama1996approximation}.
    Given three sets $X, Y, W$ and a cost function \func{d}{X\times Y\times W}{\naturalnumbers}.
    An assignment is a set $A \subseteq X \times Y \times W$ such that every element of $X \cup Y \cup W$ belongs to exactly one triple in $A$.
    The cost of an assignment $A$ is $\sum_{t \in A} d(t)$.
    The minimum 3-dimensional assignment is defined an assignment with minimum cost and we set $M3DA(X,Y,W,d)$ to be the costs of an optimal assignment.

    Let $(X,Y,W,d)$ be an instance of the minimum 3-dimensional assignment problem and let $l$ be a symbol that is not in $X \cup Y \cup W$.
    To simplify notations we define $T := X \times Y \times W$; $R := X \cup Y \cup W$ and for $t \in T$ and $r \in R$ the notation $r \in t$ represents $t = (x,y,w)$ and $r \in \{x,y,w\}$.
    We define the network $\net[X,Y,W,d] := (\buses, \lines, \plmin, \plmax, \pgmax, \costs)$ with
    $\buses := \{l\} \cup R \cup \{t, t_g, t_d  \mid t \in T\}$;
    $\forall t \in T: \pgmax[t] := 3, \pgmax[t_g] := 5, \costs[t] := \frac{d(t)}{3}, \costs[t_g] := 0$;
    $\plmin[l] := \plmax[l] := 3|T| + |R|$;
    and
    $\lines := 
    \bigcup_{r \in R} \{ \edge[rl][1][1] \}
    \cup
    \bigcup_{t \in T} ( \{ \edge[tt_g][3][1], \edge[tl][5][1], \edge[tt_d][3][1] \}
    \cup
    \bigcup_{r \in t} \{\edge[t_dr][1][1]\} )
    $.

    We are going to show that $\ots[\net[X,Y,W,d]] = M3DA(X,Y,W,d)$.
    This is done by showing that for every assignment there is a \feassoln and for every \feassoln there is an assignment such that the costs are the same.

    Let $A$ be an assignment.
    We define \feassol with 
    $\linesw := \{ \sedge[tt_d] \mid t \notin A\}$;
    $\pa[l] := 5$;
    $\pload[l] := 3|T| + |R|$;
    $\forall r \in R: \pa[r] := 4$;
    $\forall t \in T: \pa[t] := 0, \pa[t_g] := -5, \pa[t_d] := 3, \pgen[t_g] := 5$;
    $\forall t \in A: \pgen[t] := 3$.
    This definition satisfies all generation, load and flow bounds.
    Because $A$ is an assignment we know that every $r \in R$ is connected to exactly one $t \in T$.
    Hence Kirchhoff's conservation law at $r$ is satisfied.
    This shows that \feassol is a \feassoln.
    The cost of this solution is $\sum_{t \in T} (0\pgen[t_g] + \frac{d(t)}{3}\pgen[t]) = \sum_{t \in A} d(t)$ which is equal to the cost of the assignment.

    Let \feassol be a \feassoln.
    We define $A := \{ t \in T \mid \sedge[tt_d] \notin \linesw \text{ and } \exists r\in t: \sedge[rt_d] \notin \linesw \}$
    and show that $A$ is an assignment.
    To satisfy the demand, all lines of $l$ have to be congested and can not be switched off.
    Hence every $r \in R$ has an implicit demand of $1$ and must therefore be connected to some $t_d$ which has a connection to $t$.
    Assume there is an $r \in R$ with $t, t' \in A$, $t \neq t'$, $r \in t$ and $r \in t'$.
    Because $t\in A$ we know that the lines \sedge[tt_d] and \sedge[rt_d] are not switched.
    The congestion of \sedge[tl] and \sedge[rl] implies $\pa[r] - \pa[l] = 1$ and $\pa[t] - \pa[l] = 5$.
    Given these phase angle differences, the capacities of \sedge[tt_d] and \sedge[rt_d] allow only the solution $\pa[r] - \pa[t_d] = 1$ for the phase angle solution between $r$ and $t_d$.
    This implies a flow of $1$ from $t_d$ to $r$.
    Similarly we can derive that there is a flow of $1$ from $t_d'$ to $r$.
    Because the line \sedge[rl] has a capacity of $1$, the extra flow of $1$ must go to 
    some other $\hat{t}$ via $\hat{t_d}$ with $r \in \hat{t}$.
    However, because all lines of $l$ are congested we know that $\pa[r] - \pa[\hat{t}] > 0$ and hence no flow can go in that direction.
    This shows that the assumption $r \in t$ and $r \in t'$ leads to a contradiction and hence $A$ is an assignment.
    Every bus $t \notin A$ can satisfy its implicit demand from the line \sedge[tl] via the generator in $t_g$ for no costs.
    Hence $\ots[\net[X,Y,W,d]] = \sum_{t \in A} d(t)$.
\end{proof}

\section{Planar Networks}
In this section we show the complexity result that the \feasn is strongly NP-hard for planar networks with a maximum degree of 3.
For a given graph $G$, let \net[G] be as defined in the proof of Theorem~\ref{theo:approx}. 
We set \net[G][h] to be the same network where $g'$, $l'$ and all their lines are removed.
\net[G][h] provides an encoding of the Hamilton Path problem via the switching problem (without minimum demand).
One can use a very similar proof as in Theorem~\ref{theo:approx} to show that $\msf[\net[G][h]] = 2$ iff $G\text{ has a Hamiltonian Path}$.
The Hamiltonian circuit problem is strongly NP-complete even for planar and cubic graphs \cite{garey1976planar} and the generalization to the Hamiltonian Path problem is trivial. 
Hence we can derive the following result.

\begin{corollary}
    \label{theo:planar}
    The \feasn for planar networks with a max degree of 3 is strongly NP-hard.
\end{corollary}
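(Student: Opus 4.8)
The plan is to reduce from the Hamiltonian Path problem and then convert the resulting statement about the maximum switching flow into a \feasn statement via Lemma~\ref{lem:feas}. As the source problem I would take: given a planar graph $G$ of maximum degree $3$ and two vertices $a,b$ of degree at most $2$, decide whether $G$ has a Hamiltonian path from $a$ to $b$. This is strongly NP-complete: the Hamiltonian circuit problem is already strongly NP-complete for planar cubic graphs~\cite{garey1976planar}, and a cubic graph $H$ has a Hamiltonian circuit if and only if for some edge $\{a,b\}$ the graph $G:=H-\{a,b\}$ has a Hamiltonian path between $a$ and $b$; trying all $|\lines[h]|$ edges gives a polynomial reduction. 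Deleting $\{a,b\}$ does two useful things at once: it lowers $\deg(a),\deg(b)$ to $2$, and since $a,b$ were adjacent in a planar embedding they now lie on a common face, into which the extra buses and lines can later be drawn without crossings.

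From such a $G$ with $n:=|\buses[h]|$ I build exactly the network $\net[G][h]$ described before the statement, namely $\net[G]$ of Theorem~\ref{theo:approx} with $g'$, $l'$ and their incident lines removed, and I impose the minimum demand $\plmin[l]:=2$ at the single load $l$. Each original vertex keeps its degree, $a$ and $b$ gain one incident line each (to $g$, resp.\ to $l$) and hence have degree at most $3$, while $g$ and $l$ have degree $2$; drawing $g$, $l$ and the three added lines \sedge[ga], \sedge[bl], \sedge[gl] inside the common face of $a,b$ keeps the network planar. For the flow value I replay the estimates of Theorem~\ref{theo:approx} with $g',l'$ absent. Now $l$ has only two incident lines, so the minimum demand forces $\pflow[gl]+\pflow[bl]\ge 2$; since $\pflow[bl]\le 1$ by capacity, the line \sedge[gl] can never be switched off and there is always an $a$--$b$ path in the switched subgraph $W$. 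As each line of $G$ permits a phase-angle difference of at most $1$, a path of length $t$ gives $\pa[g]-\pa[l]\le t+2$, hence $\pflow[gl]\le (t+2)/(n+1)\le 1$; therefore $\msf[\net[G][h]]\le 2$ unconditionally, with equality if and only if $W$ carries a single simple $a$--$b$ path of length $n-1$, i.e.\ a Hamiltonian path.

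The remaining step is bookkeeping. By Lemma~\ref{lem:feas}, $\feassols[\net[G][h]]\neq\emptyset$ iff $\msf[\net[G][h]]\ge\plmin[l]=2$, and since the flow can never exceed $2$ this is equivalent to $\msf[\net[G][h]]=2$, i.e.\ to $G$ having a Hamiltonian path from $a$ to $b$. The construction is clearly polynomial, and after rescaling every susceptance by $n+1$ (so the only non-unit parameter $1/(n+1)$ becomes $1$ and the graph lines get susceptance $n+1$) all numbers are bounded by a polynomial in $n$; hence the hardness is \emph{strong}. Combining this with the reduction of the first paragraph yields strong NP-hardness of \feasa for planar networks of maximum degree $3$.

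I expect the genuine crux to be the ``only if'' direction of the flow characterisation, not the graph-class bookkeeping. One must show that reaching the exact value $2$ forces $W$ to be \emph{precisely} a Hamiltonian path: any surviving chord $\{u_i,u_j\}$ of the path has $|\pa[u_i]-\pa[u_j]|\le 1$, yet the path segment between $u_i$ and $u_j$ would need phase-angle difference $j-i\ge 2$ to keep all edges congested, so a chord strictly lowers the attainable $\pa[g]-\pa[l]$ below $n+1$ and hence $\pflow[gl]$ below $1$; equivalently, the shortest $a$--$b$ path in $W$ must have length $n-1$, which is possible only if some simple path spans all $n$ vertices. The degree-$3$ and planarity conditions, by contrast, are handled cleanly by the edge-deletion device, which is why I route the whole reduction through it rather than through an arbitrary Hamiltonian-path instance.
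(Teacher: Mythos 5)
Your proposal is correct and follows essentially the same route as the paper: it builds the network $\net[G][h]$ of Theorem~\ref{theo:approx} with $g'$, $l'$ and their lines deleted, replays the phase-angle estimate $\pflow[gl] \le \frac{t+2}{n+1}$ to show $\msf[\net[G][h]] = 2$ exactly when $G$ has a Hamiltonian $a$--$b$ path, converts this into a \feasn statement via Lemma~\ref{lem:feas}, and grounds the whole reduction in the strong NP-completeness of Hamiltonian circuits for planar cubic graphs \cite{garey1976planar}. The only difference is that you make explicit what the paper dismisses as a ``trivial generalization'': deleting an edge $\{a,b\}$ of the cubic instance so that the attachments of $g$ and $l$ preserve both the degree-$3$ bound and planarity, and rescaling susceptances to keep all numbers polynomially bounded for strong hardness --- details the paper's sketch omits but clearly presupposes.
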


\bibliographystyle{alpha}
\bibliography{braess_paradox}
\end{document}